\newtheorem{theorem}{Theorem} 
\newtheorem{lemma}[theorem]{Lemma}
\newcounter{cntLemmaNumber}
\newcounter{cntTheoremNumber}
\newcommand{\eps}{\epsilon}
\newcommand{\size}[1]{\ensuremath{\left|#1\right|}}
\newcommand{\set}[1]{\left\{ #1 \right\}}
\DeclarePairedDelimiterX{\infdivx}[2]{(}{)}{%
  #1\;\delimsize\|\;#2%
}
\DeclarePairedDelimiter{\ceil}{\lceil}{\rceil}
\newif\ifarxiv
\begin{document}
\begin{titlepage}
\title{Local Max-Cut on Sparse Graphs}

\date{}
\ifarxiv
\author{Gregory Schwartzman\footnotemark[3]}
\renewcommand*{\thefootnote}{\fnsymbol{footnote}}
\footnotetext[3]{%
JAIST,
\texttt{greg@jaist.ac.jp}.
}
\fi
\end{titlepage}

\maketitle
\begin{abstract}
We bound the smoothed running time of the FLIP algorithm for local Max-Cut as a function of $\alpha$, the arboricity of the input graph. We show that, with high probability and in expectation, the following holds (where $n$ is the number of nodes and $\phi$ is the smoothing parameter):
\begin{enumerate}
    \item When $\alpha = O(\log^{1-\delta} n)$ FLIP terminates in $\phi poly(n)$ iterations, where $\delta \in (0,1]$ is an arbitrarily small constant. Previous to our results the only graph families for which FLIP was known to achieve a smoothed polynomial running time were complete graphs and graphs with logarithmic maximum degree.
    \item For arbitrary values of $\alpha$ we get a running time of $\phi n^{O(\frac{\alpha}{\log n} + \log \alpha)}$. This improves over the best known running time for general graphs of $\phi n^{O(\sqrt{ \log n })}$ for $\alpha = o(\log^{1.5} n)$. Specifically, when $\alpha = O(\log n)$ we get a significantly faster running time of $\phi n^{O(\log \log n)}$.
\end{enumerate}



\end{abstract}
\pagenumbering{gobble}
\newpage
\pagenumbering{arabic}

\section{Introduction}

\textbf{Local Max-Cut} Let us start by introducing the problem of local Max-Cut and the FLIP algorithm. Our input is a weighted graph $G(V,E)$, with $\size{V}=n, \size{E}=m$. The local Max-Cut problem asks us to find a cut ($C\subset V$) that is \emph{locally maximal} -- where the weight of the cut cannot be improved by moving any one vertex to the other side of the cut (throughout the paper we use the terms ``move" and ``flip" interchangeably). This is a very natural problem in the context of \emph{local search}, where it is known to be PLS-complete (PLS is the complexity class Polynomial Local Search) \cite{SchafferY91}. It also appears in the context of the party affiliation game \cite{FabrikantPT04} where it corresponds to a Nash equilibrium, and in the context of finding a stable configuration of Hopfield networks \cite{hopfield1982neural}.

\paragraph{FLIP} A simple algorithm for the problem is the following local search approach, called FLIP. Starting from any initial cut, pick an arbitrary vertex and move it to the other side of the cut as long as this results in an improvement to the weight of the cut. It is clear that once FLIP terminates, we have arrived at a locally maximal cut. It is known that there exist weight assignments to the edges such that this algorithm takes exponential time to terminate \cite{JohnsonPY88}. Nevertheless, empirical evidence suggests that the algorithm terminates in reasonable time \cite{AngelBPW17}. To bridge this gap, we are interested in the \emph{smoothed complexity} of the problem.


\paragraph{Smoothed analysis}  Smoothed analysis was introduced by \cite{SpielmanT04} in an attempt to explain the fast running time of the Simplex algorithm in practice, despite its exponential worst-case runtime. In the smoothed analysis framework, an algorithm is provided with an adversarial input that is perturbed by some random noise. We are then interested in bounding the running time of the algorithm in expectation or with high probability (w.h.p)\footnote{In this paper we take w.h.p to mean with probability of at least $1-1/n$. However, our results can be made to hold with probability $1-1/n^c$ for an arbitrarily large constant $c$ without affecting the asymptotic running times.} over the noise added to the input. Spielman and Teng show that the smoothed runtime of the simplex algorithm is \emph{polynomial}.
This approach is motivated by the fact that real-world inputs are not completely adversarial nor completely random, and can be seen as a middle ground between worst-case and average-case analysis. 

 \paragraph{Smoothed local Max-Cut} Let us now define the smoothed version of the local Max-Cut problem. Instead of assuming that $G$ has arbitrary edge weights, let the set of edge weights, $\set{w_e}_{e\in E}$, be independent random variables taking real values in $[-1, 1]$. Note that they need not be identically distributed. Each $w_e$ is chosen from a distribution whose probability density function is $f_e : [-1,1] \rightarrow [0, \phi]$. Where $\phi>1/2$ is a smoothing parameter, upper bounding the density of the distribution for any specific value. This parameter determines the strength of the adversary -- setting $\phi=1/2$ implies a uniform distribution of weights over $[-1, 1]$, while allowing $\phi \rightarrow \infty$ makes the adversary as strong as in the classical worst-case analysis paradigm. 
 
\paragraph{Smoothed analysis of FLIP}
The first to provide a quasi-polynomial bound for the smoothed runtime of FLIP on general graphs were \cite{EtscheidR17}. Prior to their work, it was only known that FLIP achieves a polynomial running time on graphs with logarithmic degree \cite{ElsasserT11}. It was first proved in \cite{AngelBPW17} that FLIP has smoothed \emph{polynomial} running time on \emph{complete} graphs. They provide a bound of $O(\phi^5 n^{15.1})$ w.h.p. 
The current state-of-the-art result for general graphs is due to \cite{ChenGVYZ20}, achieving a bound of $\phi n^{O(\sqrt{\log n})}$. The best-known result for complete graphs is due to \cite{BibakCC21}, where they provide a bound of $O(\phi n^{7.83})$ w.h.p. 


\paragraph{Our results} Proving that FLIP has a polynomial smoothed running time in general graphs is a major open problem in the field of smoothed analysis. A natural approach to attack this problem is to consider the problem on various graph classes of increasing complexity. Prior to our results, FLIP was only known to achieve a smoothed polynomial running time on two graph classes: complete graphs \cite{AngelBPW17, BibakCC21} and graphs with logarithmic maximum degree \cite{ElsasserT11}\footnote{Strictly speaking, the analysis provided in \cite{ElsasserT11} is for Gaussian perturbations. However, we believe that it can be easily generalized to general perturbations.}. In this paper, we focus on graphs with \emph{bounded arboricity}.

The arboricity of a graph is a measure of its sparsity. It is defined to be the minimum number of forests into which the edges can be partitioned. The arboricity is equal, up to a factor of 2, to the maximum average degree in any subgraph. Low arboricity graphs can be seen as being ``sparse everywhere".
Graphs with low arboricity are a very natural family of graphs which includes many important graph classes, such as minor closed graphs (e.g., planar graphs, bounded treewidth graphs), and randomly generate preferential attachment graphs (e.g., Barabasi-Albert \cite{barabasi1999emergence}). For an excellent exposition on the importance of low-arboricity graphs we refer the reader to \cite{EdenLR20, EdenRS20}.

We prove the following theorem:
\begin{restatable}{theorem}{mainthm}
\label{thm: mainthm}
Let $G$ be a graph with arboricity $\alpha$ where edge weights are independent random variables with density bounded by $\phi$, for any $\beta \in [2, n]$ it holds in expectation and w.h.p that FLIP terminates within 
$\phi n^{O(\frac{ \beta \alpha}{\log n} + \log_{\beta} \alpha)}$ iterations.
\end{restatable}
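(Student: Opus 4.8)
The plan is to adapt the improving-sequence machinery of \cite{EtscheidR17, ChenGVYZ20} to the bounded-arboricity setting, feeding in the bound $m\le\alpha(n-1)$ (a defining property of arboricity) both in the potential argument and, more importantly, in the union-bound/counting step, and optimising a new parameter $\beta$ that trades encoding cost against the depth of a recursive decomposition.

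\textbf{From slow executions to cheap improving sequences.} For a sequence of flips $\sigma$ write $\Delta(\sigma)$ for its total gain. Since every flip strictly increases the cut value, which always lies in $[-m,m]$ with $m\le\alpha(n-1)$, cutting the execution of FLIP into consecutive blocks of $L$ flips shows that FLIP halts within $O(L\alpha n/\rho_L)$ steps, where $\rho_L$ is the minimum of $\Delta(\sigma)$ over all \emph{improving} sequences $\sigma$ of length $L$ (those in which every flip, in its context, strictly improves the cut). So it suffices to pick $L=\Theta(\log n)$ and $\epsilon=n^{-O(\beta\alpha/\log n+\log_\beta\alpha)}/\phi$ and prove $\Pr[\rho_L<\epsilon]\le1/n$; the w.h.p.\ bound then follows immediately, and the in-expectation bound by integrating the (at worst exponential) tail.

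\textbf{Per-sequence anti-concentration via folding.} Fix an improving $\sigma$ of length $L$ with distinct vertex set $S$, $p=|S|$. Each single-flip gain $\alpha_i$ is a linear form with $\{-1,0,1\}$ coefficients in the weights incident to $v_i$; comparing two consecutive occurrences of a vertex $v$ (at positions $i_r<i_{r+1}$) gives $\alpha_{i_{r+1}}=-\alpha_{i_r}+\gamma_r$, where $\gamma_r$ is supported only on edges of $G[S]$ (namely $vu$ for $u$ flipped an odd number of times in between), and $\gamma_r\ne0$ since two consecutive gainful flips of $v$ cannot be exact negatives. Moreover the part of $\alpha_{i_1(v)}$ supported on edges leaving $S$ equals $(-1)^{C(v)}W_v$ for a single variable $W_v$ of density $\le\phi$ that is independent across $v\in S$ and of all weights on $E(G[S])$, because the edge sets $\{vu:u\notin S\}$ are disjoint for distinct $v$. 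Hence the whole system $\{\alpha_i\in(0,\epsilon)\}$ lives in the $\le(\alpha+1)p$ variables $\{w_e:e\in E(G[S])\}\cup\{W_v:v\in S\}$, and if $\mathrm{rk}(\sigma)$ denotes the rank of this system then
\[
\Pr[\sigma\text{ improving},\ \Delta(\sigma)<\epsilon]\ \le\ (2\phi\epsilon)^{\mathrm{rk}(\sigma)}.
\]
Crucially this bound is oblivious to the initial configuration outside $S$, so the only sign data to be union-bounded over is $C$ restricted to $S$ (that is, $2^p$ patterns) together with the interior sign pattern, whose freedom is bounded by $|E(G[S])|\le\alpha p$ — this is exactly where the arboricity bound replaces the maximum degree used in \cite{ElsasserT11}.

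\textbf{The rank lemma (the main obstacle) and the union bound.} What remains is the crux: to show that, apart from a family of ``degenerate'' sequences small enough to count directly, every improving $\sigma$ of length $L=\Theta(\log n)$ has $\mathrm{rk}(\sigma)=\Omega(L)$ while admitting an encoding of length $O\!\big((\beta\alpha/\log n+\log_\beta\alpha)\cdot\mathrm{rk}(\sigma)\cdot\log n\big)$ bits, i.e.\ there are at most $n^{O(k(\beta\alpha/\log n+\log_\beta\alpha))}$ improving sequences of length $L$ with $\mathrm{rk}(\sigma)=k$. The rank deficiency $L-\mathrm{rk}(\sigma)$ is created only by vertices of high multiplicity whose \emph{activated} interior neighbourhood is small; following the style of \cite{ChenGVYZ20} I would organise the occurrences of each such vertex into a laminar/recursive structure of branching $\beta$ and depth $O(\log_\beta(\text{activated degree}))=O(\log_\beta\alpha)$, charging a unit of rank to each newly activated interior edge and using $\sum_{v\in S}\deg_{G[S]}(v)=2|E(G[S])|\le2\alpha p$ to bound the total deficiency and the encoding cost ($O(\beta)$ local choices per node, $O(\log_\beta\alpha)$ levels, $O(\alpha)$ interior edges per vertex); dividing by $L=\Theta(\log n)$ produces the two exponent terms. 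The step I expect to be hardest is making the deficiency accounting tight when an edge of $G[S]$ is shared between the ``gadgets'' of two distinct high-multiplicity vertices — precisely the situation where a global arboricity bound, rather than a local degree bound, is what one has to work with. Granting this, the union bound reads
\[
\Pr[\rho_L<\epsilon]\ \le\ \sum_{k\ge1} n^{O(k(\beta\alpha/\log n+\log_\beta\alpha))}\,(2\phi\epsilon)^{\Omega(k)}\ +\ \Pr[\text{a degenerate sequence is cheap}],
\]
and choosing $\epsilon=n^{-c(\beta\alpha/\log n+\log_\beta\alpha)}/\phi$ for a suitable constant $c$ makes the geometric series sum to $\le 1/n$ (the degenerate term being negligible by its direct count). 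Combined with the potential argument this gives that FLIP terminates within $O(L\alpha n/\epsilon)=\phi\,n^{O(\beta\alpha/\log n+\log_\beta\alpha)}$ steps w.h.p.\ and in expectation, which is Theorem~\ref{thm: mainthm}.
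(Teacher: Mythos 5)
Your proposal transplants the rank-and-encode machinery of Etscheid--R\"oglin and Chen et al.\ into the bounded-arboricity setting, but you explicitly leave unresolved the single step you yourself identify as the crux: proving the rank lemma, and in particular making the deficiency accounting tight when an interior edge of $G[S]$ is shared between the ``gadgets'' of two distinct high-multiplicity vertices. The phrase ``Granting this, the union bound reads\ldots'' signals that what you have is a plan whose hardest component is missing, and it is not at all clear the plan goes through: arboricity gives a global average-degree bound on $G[S]$, whereas your charging scheme must distribute rank contributions to the recursive gadget of each individual high-multiplicity vertex, precisely the local-to-global mismatch you flag. So this is a genuine gap, not a detail.

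The paper's actual proof avoids a rank lemma entirely by using arboricity in a structurally different and far simpler way. It iteratively peels off vertices of degree at most $2\beta\alpha$ to build a hierarchical partition $V=V_1\cup\cdots\cup V_{\log_\beta n}$ (of depth $\log_\beta n$) so that every $v\in V_i$ has at most $2\beta\alpha$ ``forward'' neighbours $N_*(v)\subseteq V_i\cup V_{i+1}\cup\cdots$. A ``good'' event is two consecutive flips of some $v\in V_i$ with no intervening flip of a backward neighbour in $N(v)\setminus N_*(v)$; a union bound over the $n$ vertices and the $3^{2\beta\alpha}$ coefficient patterns on $E_*(v)$ (in the style of Els\"asser--Tscheuschner, over vertices rather than over sequences) shows that with high probability every good event improves the cut by $\phi^{-1}3^{-2\beta\alpha}n^{-c}$. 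Then a potential function $\varphi(t)=\sum_i a_i(t)(\log_\beta n - i)(2\beta\alpha)^{\log_\beta n-i}$ on the set of ``active'' nodes decreases by at least $1$ per flip in the absence of a good event, and since it starts at $n^{O(\log_\beta\alpha)}$, a good event must occur within that many steps. Multiplying the two bounds, and summing a geometric tail over $c$ for the expectation, gives the theorem. No encoding of flip sequences, no rank analysis, and no counting of improving sequences is required, which is why the paper's argument is both complete and considerably shorter than the route you sketch.
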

Setting $\beta = \log^{\delta} n ,\alpha = \log^{1-\delta} n$ we get the following running time:

\begin{align*}
    &\phi n^{O(\frac{ \beta \alpha}{\log n} + \log_{\beta} \alpha)} = \phi n^{O(\frac{ \log n}{\log n} + \log_{\log^\delta n} \log \log^{1 - \delta} n)} = \phi n^{O(1 + \frac{\log \log^{1 - \delta} n}{\log \log^\delta n})} 
    \\&= \phi n^{O(1 + \frac{(1 - \delta)\log \log n}{\delta \log \log n})} = \phi n^{O(1 + \frac{(1 - \delta)}{\delta})} = \phi n^{O(1/\delta)}
\end{align*}
The above implies that as long as $1/\delta = O(1)$ (i.e., $\delta$ is a constant), for any graph with arboricity $O(\log^{1-\delta} n)$ FLIP terminates in polynomial time in expectation and w.h.p.
\ifarxiv
\footnote{In a previous version of the manuscript we set $\alpha=\beta =O(\sqrt{\log n})$ which leads to sub-optimal results. This improved choice of $\alpha, \beta$ is due to Bruce Reed.}
\fi

Setting $\beta = O(1)$, for arbitrary values of $\alpha$ we get a running time of $\phi n^{O(\frac{\alpha}{\log n} + \log \alpha)}$. This improves over the best known running time for general graphs of $\phi n^{O(\sqrt{ \log n })}$ \cite{ChenGVYZ20} for $\alpha = o(\log^{1.5} n)$. Specifically, when $\alpha = O(\log n)$ we get a significantly faster running time of $\phi n^{O(\log \log n)}$.


\paragraph{Analysis outline} Our analysis draws inspiration from both \cite{ElsasserT11} and \cite{EtscheidR17}. In \cite{ElsasserT11} it is shown that w.h.p \emph{any} node movement (i.e., every step of FLIP) increases the cut weight by at least an additive $1/ \phi poly(n)$ factor, which immediately implies a running time of $\phi poly(n)$ (as the cut weight cannot exceed $n^2$). It is crucial for their analysis that the degree of \emph{all nodes} does not exceed $O(\log n)$ as otherwise we cannot get the guarantee to hold for all node movements. In \cite{EtscheidR17} longer sequences of movements (linear size) are considered. Specifically, they focus on consecutive movements of the same node in the sequence. They show that w.h.p any sufficiently long sequence of flips results in an additive improvement of at least $1/\phi n^{O(\log n)}$. 

Our analysis makes use of the bounded arboricity of the graph to construct a hierarchical partition\footnote{We would like to emphasize that the partition is only used for the sake of analysis and the algorithm is oblivious to it.} of the node set, $V=\cup_{i=1}^{\tau} V_i$ with the following property: every $v\in V_i$ has at most $O(\alpha)$ neighbors in $\cup_{j=i}^{\tau} V_j$. We use an argument similar to that of \cite{ElsasserT11} to guarantee that any 2 consecutive movements of $v \in V_i$, with no movements of any neighbors in $V_j, j < i$ in between them, significantly increase the cut weight (when $\alpha$ is sufficiently small). 
We then show that such ``good" sequences of movements must appear in any sufficiently long (super-polynomial) sequence of movements. Our analysis is clean and simple, and, to the best of our knowledge, it is the first to make use of structural properties of the input graph.

\section{FLIP on bounded arboricity graphs}


\paragraph{Vertex partition} We consider an input graph $G(V,E)$ with arboricity $\alpha$. Therefore, all induced subgraphs of $G$ have average degree at most $2\alpha$. We use this fact to create a partition of $V$. Let $\beta \in [2, n]$ be a parameter.
Starting with $G_1=G$, it has average degree at most $2\alpha$, therefore, it must be the case that at most $n/\beta$ vertices have degree larger than $2\beta \alpha$. Let $V_1$ be the vertices with degree $\leq 2\beta \alpha$. Let us consider the induced graph $G_2= G[V \setminus V_1]$. It has average degree at most $2\alpha$ and at most $\ceil{n/\beta}$ vertices. We take $V_2$ to be the nodes with degree $\leq 2\beta \alpha$ in $G_2$. Generally, we consider $G_i = G[V \setminus \cup_{j=1}^{i-1} V_j]$ and take $V_i$ to be the vertices with degree $\leq 2\beta \alpha$ in $G_i$. As every such step decreases the number of nodes in $G_i$ by a $\beta$ factor this process terminates in $\ceil{\log_\beta n}$ (going forward we omit the ceiling for ease of notation). This partitions $V$ into $V_1,\dots,V_{\log_\beta n}$. 

Let us denote by $N(v)$ the set of all neighbors of $v$ in $G$ and let $E(v)$ be the set of all edges adjacent to $v$. Let $V_i^+ = \cup_{j=i}^{\log_\beta n} V_j$ and for $v\in V_i$ let $N_*(v) = N(v) \cap V_i^+$. Similarly, let $E_*(v) = \set{(v,u) \in E(v) \mid u\in N_*(v)}$. Note that it holds that $\forall v\in V, \size{N_*(v)} \leq 2\beta \alpha$.


\paragraph{Good movements} Using the above decomposition we can now characterize specific ``good" sequences of movements that always lead to a significant improvement in the cut weight.

Let us consider some node $v\in V$, any movement of $v$ during the execution of FLIP results in an increase of the form $\sum_{ e \in E(v)} \lambda_e w_e$ where $\lambda_e \in \set{-1, 1}$. As long as $\set{w_e}_{e\in E(v)}$ are independent, it is known that for every assignment of values to $\set{\lambda_e}_{e\in E(v)}$ it holds that $Pr[\sum_{v\in e} \lambda_e w_e \in [0,\eps] ]\leq \epsilon \phi$ (i.e., the linear combination also has density bounded by $\phi$), this holds even if $\set{\lambda_e}_{e\in E(v)}$ take values in $\mathbb{Z}$ \cite{AngelBPW17}. If we could show that for any configuration of $\set{\lambda_e}_{e\in E(v)}$ it holds that $\sum_{e\in E(v)} \lambda_e w_e \notin [0,\eps]$ then we could guarantee that any movement of $v$ during the execution of FLIP results in an increase of at least $\epsilon$ to the cut weight, \emph{regardless} of the current cut configuration. 

Unfortunately, to achieve such a result, we must union bound over $2^{d(v)}$ possible configurations. This is the approach taken by \cite{ElsasserT11} to achieve their results for graphs with logarithmic maximum degree. However, for graphs with bounded arboricity, we need a new approach.

Similar to the analysis of \cite{EtscheidR17} we are interested in two consecutive movements of some node. However, we make use of our decomposition and focus on movements of
$v\in V_i$ that happen \emph{before} any nodes in $N(v) \setminus N_*(v)$ move. Summing the two increases, we get $\sum_{e\in E_*(v)} \lambda_e w_e$, where $\lambda_e \in \set{-2,0,2}$. We provide an illustration in Figure~\ref{fig:flip1}.
\begin{figure}[htbp] 
\label{fig: flip}
    \centering
    \includegraphics[width=\linewidth]{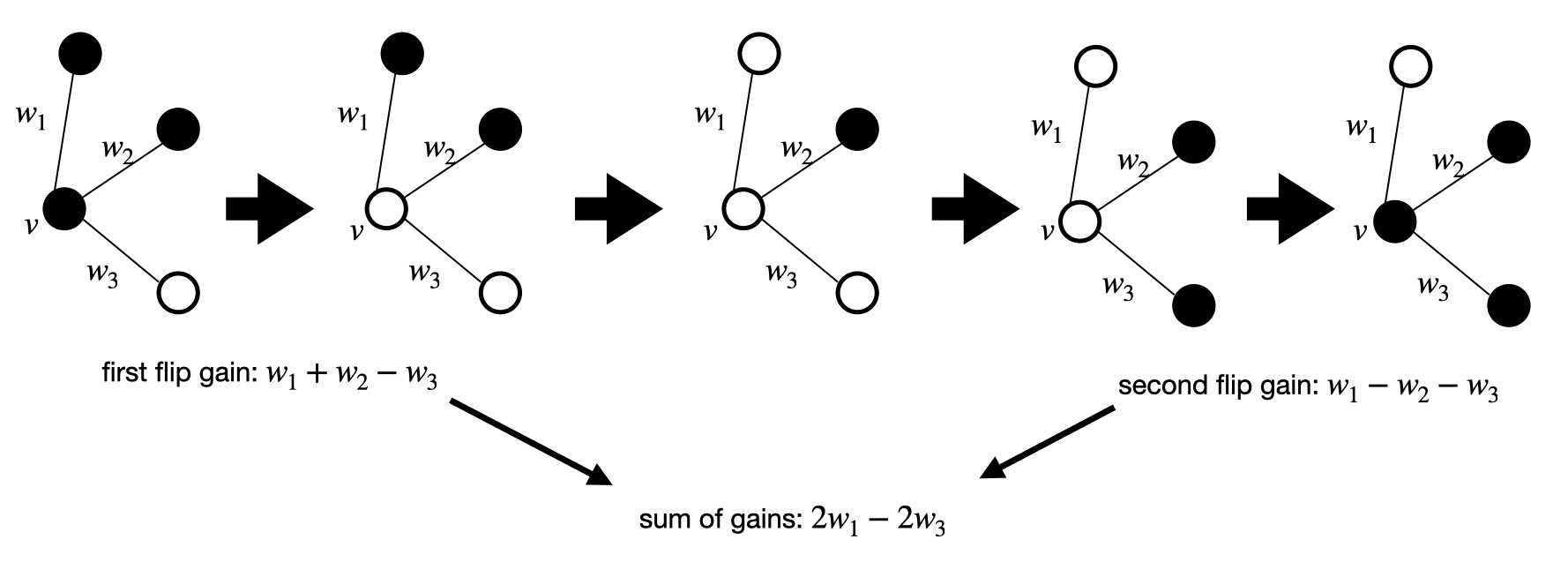}
    \caption{We consider the gain in the cut weight when $v$ is flipped for the first time plus the gain when it is flipped for the second time. We note that weights of edges to nodes that were flipped an even number of times between the two movement of $v$ get cancelled out and do not appear in the sum, while nodes that were flipped an odd number of times have a coefficient in $\set{-2,2}$.}
    \label{fig:flip1}
\end{figure}
This approach has the benefit of only considering linear combinations of at most $2\beta \alpha$ elements (instead of $d(v)$). This means that when applying a union bound it is sufficient to sum over only $3^{2\beta \alpha}$ configurations. We state the following lemma:
\begin{lemma}
\label{lem: inc guarantee}
For any $c\geq 1$, with probability at least $1- 1/n^c$ for all cut configurations and all nodes, for every two consecutive moves of $v\in V_i$ during the execution of FLIP such that no node in $N(v) \setminus N_*(v)$ moved in between them, the weight of the cut increases by at least $\phi^{-1}3^{-2\beta \alpha} n^{-c}$. 
\end{lemma}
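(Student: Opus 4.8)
The plan is to fix a node $v \in V_i$ and a "type" of combined movement, union bound over all such types, and then take a union bound over all $n$ nodes and a small factor for the choice of the two cut configurations at the two flips. Concretely, by the argument sketched before the lemma, if $v$ is flipped twice during the execution of FLIP and no node in $N(v)\setminus N_*(v)$ moves in between, then the total gain in cut weight equals $\sum_{e \in E_*(v)} \lambda_e w_e$ for some $\lambda \in \{-2,0,2\}^{E_*(v)}$. The key probabilistic input is the anticoncentration fact quoted from \cite{AngelBPW17}: for independent $\{w_e\}_{e\in E(v)}$ with densities bounded by $\phi$, and any fixed integer coefficient vector $\lambda$ (not all zero), $\Pr[\sum_e \lambda_e w_e \in [0,\eps]] \le \eps\phi$. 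First I would apply this with $\eps = \phi^{-1}3^{-2\beta\alpha}n^{-c}$ so that each fixed nonzero $\lambda$ contributes failure probability at most $3^{-2\beta\alpha}n^{-c}$.

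Next I would bound the number of coefficient vectors that can actually arise. Since $|E_*(v)| = |N_*(v)| \le 2\beta\alpha$ and each coordinate of $\lambda$ lies in $\{-2,0,2\}$, there are at most $3^{2\beta\alpha}$ possible vectors $\lambda$; union bounding over these for a single $v$ gives failure probability at most $n^{-c}$ that \emph{some} realizable combined-gain $\sum_{e\in E_*(v)}\lambda_e w_e$ lands in $[0,\eps]$. Then union bound over all $n$ choices of $v$ (and, if one wants to be careful, replace $c$ by $c+1$ at the outset to absorb the extra factor of $n$, or simply note the statement is "for any $c\ge 1$" so reindexing is free). The crucial point making the union bound affordable is exactly the decomposition: because we only ever look at edges to $N_*(v)$, the number of relevant coefficient vectors is $3^{O(\beta\alpha)}$ rather than $3^{d(v)}$, so when $\beta\alpha$ is small this is subexponential in $n$.

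The one subtlety I would be careful about — and the part I expect to be the main obstacle in writing a fully rigorous argument — is the quantifier "for all cut configurations and all nodes": the adversary (or rather the worst-case choice of initial cut and the order in which FLIP processes vertices) could in principle choose which two flips of $v$ to consider \emph{after seeing the weights}. The resolution is the standard one used in \cite{ElsasserT11}-style arguments: the gain of any single flip of $v$ from \emph{any} cut configuration is a linear form $\sum_{e\in E(v)}\mu_e w_e$ with $\mu \in \{-1,1\}^{E(v)}$ determined by the configuration of $N(v)$; summing two such forms for two flips with no $N(v)\setminus N_*(v)$-move in between, the edges to $N(v)\setminus N_*(v)$ keep the same $\mu_e$ at both flips and cancel, leaving only edges in $E_*(v)$ with coefficients in $\{-2,0,2\}$. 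Hence the set of all possible combined gains, over all cut configurations, is contained in the finite set $\{\sum_{e\in E_*(v)}\lambda_e w_e : \lambda\in\{-2,0,2\}^{E_*(v)}\}$, which is exactly what we union-bounded over. So the event in the lemma is contained in the union of the $n\cdot 3^{2\beta\alpha}$ bad events analyzed above, and its probability is at most $n \cdot 3^{2\beta\alpha} \cdot \eps\phi \cdot (3^{2\beta\alpha}\phi)^{-1} = \dots$; choosing $\eps = \phi^{-1}3^{-2\beta\alpha}n^{-c}$ and absorbing the leading $n$ into the constant gives the claimed bound. I would close by noting that whenever the combined gain is not in $[0,\eps]$ it must be $\ge \eps$ (it cannot be negative, since FLIP only makes improving moves, so each of the two flips has strictly positive gain), which is what yields the stated increase.
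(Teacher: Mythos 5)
Your proposal is correct and follows the paper's proof essentially verbatim: fix $\eps = \phi^{-1}3^{-2\beta\alpha}n^{-c}$, apply the anticoncentration bound from \cite{AngelBPW17} to the linear form $\sum_{e\in E_*(v)}\lambda_e w_e$, and union bound over the $n\cdot 3^{2\beta\alpha}$ choices of node and coefficient vector, using the observation that every realizable combined gain lies in this finite family of linear forms. If anything you are slightly more careful than the paper, in that you explicitly flag the extra factor of $n$ from the union bound (absorbed by reindexing $c$) and explicitly note that the combined gain of two improving flips is positive, so ``not in $[0,\eps]$'' does yield the claimed lower bound.
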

\begin{proof}
We know that for every assignment of values in $\set{-2,0,2}$ to $\set{\lambda_e}_{e\in E_*(v)}$ it holds that $Pr[\sum_{e\in E_*(v)} \lambda_e w_e \in [0,\eps] ]\leq \epsilon \phi$. Setting $\epsilon = \phi^{-1}3^{-2\beta \alpha} n^{-c}$ and taking a union bound over all nodes and all possible $3^{2\beta \alpha}$ values of $\set{\lambda_e}_{e\in E_*(v)}$ we get that with probability at least $1-1/n^c$ for all nodes and all possible linear combinations with coefficients in $\set{-2,0,2}$ it holds that $\sum_{e\in E_*(v)} \lambda_e w_e \notin [0, \phi^{-1}3^{-2\beta \alpha} n^{-c}]$. 

\end{proof}

It is also worth noting that as we only consider linear combinations of the form $\sum_{e\in E_*(v)} \lambda_e w_e$ in the above Lemma, our results still hold under a weaker independence guarantee. It is sufficient that only $\set{w_e}_{e\in E_*(v)}$ are independent for every $v\in V$. 


\paragraph{Prevalence of good movements} Lemma~\ref{lem: inc guarantee} guarantees the existence of certain types of good movement sequences (parameterized by $c$). However, it is not clear how often these sequences must occur for executions of FLIP starting from an arbitrary cut configuration. We show that after a sufficiently long sequence of steps of FLIP the cut weight must increase by at least $\phi^{-1}3^{-2\beta \alpha} n^{-c}$. We state the following lemma.
\begin{lemma}
\label{lem: termination time}
    For any $c\geq 1$, with probability at least $1-1/n^c$, FLIP terminates within $\tau_c=n^{c}\cdot n^{O (\frac{\beta \alpha}{\log n} + \log_\beta \alpha)}$ iterations.
\end{lemma}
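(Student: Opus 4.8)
The plan is to derive Lemma~\ref{lem: termination time} from Lemma~\ref{lem: inc guarantee} together with a \emph{prevalence} statement: that move-pairs of the kind quantified in Lemma~\ref{lem: inc guarantee} (call such a pair \emph{good}) cannot be avoided for long. Concretely, I would first prove that there is a threshold $\Lambda = n\cdot(2\beta\alpha)^{O(\log_\beta n)}$ such that every window of more than $\Lambda$ consecutive FLIP moves contains a good pair. Given this, the lemma follows by a routine potential argument: condition on the probability-$(1-1/n^c)$ event of Lemma~\ref{lem: inc guarantee}; chop the execution into consecutive blocks of $\Lambda+1$ moves, so that each full block contains a good pair whose two (improving) moves jointly raise the cut weight by at least $\delta:=\phi^{-1}3^{-2\beta\alpha}n^{-c}$. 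Since the blocks are disjoint these increments fall on distinct moves and hence add, while the cut weight changes in total by at most $2m\le n^2$; thus there are fewer than $n^2/\delta$ blocks and the run has fewer than $(\Lambda+1)(n^2/\delta+1)$ moves. As $n^2/\delta=\phi\,3^{2\beta\alpha}n^{c+2}$ with $3^{2\beta\alpha}=n^{O(\beta\alpha/\log n)}$ and $\Lambda=n^{O(\log_\beta(\beta\alpha))}=n^{O(\log_\beta\alpha)}$ (absorbing additive constants in the exponent), this multiplies out to $\tau_c = n^c\cdot n^{O(\beta\alpha/\log n+\log_\beta\alpha)}$.

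The content is in the prevalence claim, and this is where the hierarchical partition $V=\bigcup_iV_i$ enters. Fix a window $W$ of consecutive moves that contains no good pair; I bound $|W|$. Call a move \emph{fresh} if its vertex does not move earlier in $W$; there are at most $n$ fresh moves. For a non-fresh move $t$ of a vertex $v\in V_i$, let $t'$ be the previous move of $v$ in $W$; since $t',t$ are consecutive moves of $v$ yet not a good pair, some vertex of $N(v)\setminus N_*(v)$ moves strictly between them, and I set $\sigma(t)$ to be the last such move. Two properties do the work. First, the vertex moving at $\sigma(t)$ sits in $V_1\cup\dots\cup V_{i-1}$, so along any $\sigma$-chain the level index of the moving vertex strictly decreases; moreover a vertex of $V_1$ satisfies $N(v)\setminus N_*(v)=\emptyset$, hence can move only once in $W$ (two of its moves would be a good pair) and so is always a fresh move. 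Consequently every $\sigma$-chain reaches a fresh move in fewer than $\log_\beta n$ steps. Second — and this is where arboricity is used — each individual move $s$, say of a vertex $u$, has $|\{t:\sigma(t)=s\}|\le 2\beta\alpha$: if $\sigma(t)=s$ for a move $t$ of $v$ then $v\in N(u)$ lies at a strictly higher level than $u$, so $v\in N_*(u)$ and $|N_*(u)|\le 2\beta\alpha$, and for each such $v$ there is only one eligible $t$ (the first move of $v$ after $s$, by the ``last such'' choice of $\sigma$). Hence the moves of $W$ form a forest of at most $n$ trees rooted at the fresh moves, of depth below $\log_\beta n$ and with every vertex having at most $2\beta\alpha$ children; summing the geometric series, $|W|\le n\cdot\sum_{j=0}^{\log_\beta n}(2\beta\alpha)^j=n\cdot(2\beta\alpha)^{O(\log_\beta n)}=:\Lambda$, which is the claim.

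I expect the potential argument to be entirely routine, so the main obstacle is the prevalence claim, and inside it the quantitative point that the charging forest branches only by $|N_*(u)|\le 2\beta\alpha$ while having depth equal to the number of levels $\log_\beta n$: this is exactly what turns the tree size $(2\beta\alpha)^{O(\log_\beta n)}$ into $n^{O(\log_\beta\alpha)}$ rather than the useless $n^{O(\log_\beta n)}$ that a cruder count (charging each move to all lower-level vertices) would produce. Two points need care in the writeup. First, two moves of $v$ that are consecutive \emph{within the interval} $W$ are genuinely consecutive moves of $v$ in the whole execution (nothing between them escapes the interval), which is what lets Lemma~\ref{lem: inc guarantee} be applied to a good pair found inside $W$. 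Second, the joint increase of a good pair equals precisely the linear form $\sum_{e\in E_*(v)}\lambda_e w_e$ with $\lambda_e\in\{-2,0,2\}$ — the edges to $N(v)\setminus N_*(v)$ cancel because those neighbours do not move between the two moves of $v$ — so Lemma~\ref{lem: inc guarantee} applies as stated, and the strict positivity of each FLIP step promotes ``$\notin[0,\delta]$'' to ``$\ge\delta$''.
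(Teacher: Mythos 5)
Your proof is correct, and it reaches the same bound by a genuinely different route for the combinatorial ``prevalence'' claim. The paper tracks an \emph{active/inactive} status on vertices (a vertex is deactivated when it flips and reactivated when a strictly lower-level neighbour flips) and argues via a potential function $\varphi(t) = \sum_{i} a_i(t)(\log_\beta n - i)(2\beta\alpha)^{(\log_\beta n)-i}$, where $a_i(t)$ counts active vertices in $V_i$: in a sequence with no good pair the potential strictly decreases by at least $1$ per flip, and since it starts at $n^{O(\log_\beta \alpha)}$ the sequence cannot exceed that length. Your argument instead builds an explicit charging forest on the moves themselves: each non-fresh move is charged to the last intervening move of a lower-level ``blocker,'' level strictly decreases along $\sigma$-chains, $V_1$ vertices serve as forced roots, and the branching bound $\le 2\beta\alpha$ comes from the key observation that a move of $u$ can only be charged by moves of vertices in $N_*(u)$ and by at most one such move per vertex. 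Both arguments exploit exactly the same structure (the hierarchical peeling, $|N_*(u)|\le 2\beta\alpha$, and the fact that consecutive moves of a vertex inside a window are consecutive moves globally), and $\Lambda = n\,(2\beta\alpha)^{O(\log_\beta n)}$ matches the paper's initial potential up to lower-order factors. If anything your version makes the tree structure that the potential implicitly encodes visible, at the cost of a little extra bookkeeping (the $\sigma$ map and the ``at most one eligible $t$ per $v$'' disjointness argument); the paper's potential is more compressed but opaque. Everything else in your write-up --- the reduction to blocks of length $\Lambda+1$, the $n^2/\delta$ count of blocks, and the remark that strict improvement upgrades ``$\notin[0,\delta]$'' to ``$\ge \delta$'' --- matches the paper's argument in substance. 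One cosmetic note: like the paper's own derivation, your final expression quietly drops the $\phi$ factor from $n^2/\delta$ when matching the lemma's stated $\tau_c$; the $\phi$ reappears in the theorem, so this is a pre-existing inconsistency in the lemma's statement rather than an error on your part.
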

\begin{proof}

Starting from an arbitrary cut configuration, we note that if some node $v$ moves twice and no node $u \in N(v)\setminus N_*(v)$ moves in between, due to Lemma~\ref{lem: inc guarantee}, we are done. Therefore, we must bound the number of steps FLIP can make without allowing the above situation. Consider some execution of FLIP in which the above situation does not happen. It must be the case that if $v$ moves, then some node $u \in N(v)\setminus N_*(v)$ must move before $v$ moves again. 
Let us introduce some notation. Initially, all nodes are \emph{active}, after a node $v$ is flipped it becomes \emph{inactive} until a node $u \in N(v)\setminus N_*(v)$ is flipped (i.e., when $u$ is flipped, all nodes in $N_*(u)$ become active). Note that in a sequence without good movements, no inactive node will be flipped.
We show that eventually all nodes must become inactive.

Let us denote by $a_i(t)$ the number of active nodes in $V_i$ after the $t$-th step of the algorithm. 
Let us define the following potential function 
\[
\varphi(t) = \sum_{i=1}^{\log_\beta n} a_i(t) (\log_\beta n - i) (2\beta \alpha)^{(\log_\beta n) - i}
\]
Say that a node $v \in V_i$ is flipped at step $t$. Then $a_i$ is decreased by 1, and at most $2\beta \alpha$ other $a_j$'s ($j>i$) are incremented by 1. The largest possible increment is when all newly activated nodes are in $V_{i+1}$. This allows us to bound the change in the potential as follows:
\[
(\log_\beta n - (i+1)) 2\beta \alpha \cdot (2\beta \alpha)^{(\log_\beta n) - (i+1)}  - (\log_\beta n - i) (2\beta \alpha)^{(\log_\beta n) - i} = -\alpha^{(\log_\beta n) - i} < -1
\]

When the potential is 0 we conclude that there are no active nodes and either the algorithm is done, or the next flip will increase the cut weight significantly. As initially all nodes are active, the potential is upper bounded by:
\[
\sum_{i=1}^{\log_\beta n} \size{V_i} (\log_\beta n - i) (2\beta \alpha)^{(\log_\beta n) - i} \leq 
\sum_{i=1}^{\log_\beta n} \size{V_i} (\log_\beta n ) (2\beta \alpha)^{\log_\beta n} = (2\beta \alpha)^{\log_\beta n} n\log_\beta n = n^{O(\log_\beta \alpha)}
\]
We conclude that in every $n^{O(\log_\beta \alpha)}$ steps of FLIP we are guaranteed that the cut weight increases by at least $\phi^{-1}3^{-2\beta \alpha} n^{-c} = \phi^{-1}n^{-c} \cdot n^{-\Omega (\frac{\beta \alpha}{\log n})}$. As the cut weight cannot exceed $n^2$, the algorithm terminates within 
\[
\tau_c =\phi n^{-c} \cdot  n^{O (\frac{\beta \alpha}{\log n})} \cdot n^{O(\log_\beta \alpha)} = \phi n^{-c} \cdot n^{O (\frac{\beta \alpha}{\log n} + \log_\beta \alpha)}
\] 
steps with probability at least $1-1/n^c$. 
\end{proof}

Using the above, we state our main theorem.

\mainthm*
\begin{proof}
    The guarantee w.h.p follows directly from Lemma~\ref{lem: termination time} by setting $c$ to be any constant larger than 1. Let us now bound the expected running time. 
    
For ease of notation let us define $\tau_0 = 0$. Let $T$ be the random variable denoting the smoothed running time of FLIP. Using the law of total expectation we may express $E[T]$ as a sum of terms of the form $E[T \mid T \in [\tau_{c-1},\tau_c)]\cdot Pr[T \in [\tau_{c-1},\tau_c)]$. As the cut weight strictly increases throughout the execution of FLIP, and there are at most $2^n$ cut configurations, it holds that $T\leq2^n$ (with probability 1). Therefore, it is sufficient to sum the conditional expectations only up to $c=n$ (this is quite loose, but will not affect the asymptotics of the final solution). We get:
\begin{align*}
    E[T] = \sum_{c=1}^{n} E[T \mid T \in [\tau_{c-1},\tau_c)]\cdot Pr[T \in [\tau_{c-1},\tau_c)] \leq \sum_{c=1}^{n} \frac{\tau_c}{ n^{c-1}} = \sum_{c=1}^{n} \phi n^{O (\frac{\beta \alpha}{\log n} + \log_\beta \alpha)} = \phi n^{O (\frac{\beta \alpha}{\log n} + \log_\beta \alpha)} 
\end{align*}
This completes the proof.
\end{proof}

\ifarxiv
\paragraph{Acknowledgements} The author would like to thank Bruce Reed and Yuichi Sudo for helpful discussions and feedback.
\fi

\bibliographystyle{alpha}
\bibliography{paper}
\end{document}